\useunder{\uline}{\ul}{}
\newcommand*{\rom}[1]{\expandafter\@slowromancap\romannumeral #1@}
\newtheorem{theorem}{Theorem}[section]
\newtheorem{lemma}[theorem]{Lemma}
\def\BibTeX{{\rm B\kern-.05em{\sc i\kern-.025em b}\kern-.08em
    T\kern-.1667em\lower.7ex\hbox{E}\kern-.125emX}}
\begin{document}

\title{OptLLM: Optimal Assignment of Queries to Large Language Models}

\author{
    \IEEEauthorblockN{Yueyue Liu\IEEEauthorrefmark{2}, Hongyu Zhang\IEEEauthorrefmark{3}*\protect\thanks{*Corresponding author.}, Yuantian Miao\IEEEauthorrefmark{2}, Van-Hoang Le\IEEEauthorrefmark{2}, Zhiqiang Li\IEEEauthorrefmark{4}}
    \IEEEauthorblockA{\IEEEauthorrefmark{2}School of Information and Physical Sciences, The University of Newcastle, Newcastle, Australia}
    \IEEEauthorblockA{\IEEEauthorrefmark{3}School of Big Data and Software Engineering, Chongqing University, Chongqing, China}
    \IEEEauthorblockA{\IEEEauthorrefmark{4}School of Computer Science, Shaanxi Normal University, Shaanxi, China}
    \IEEEauthorblockA{yueyue.liu@uon.edu.au, hyzhang@cqu.edu.cn, sky.miao@newcastle.edu.au, vanhoang.le@uon.edu.au, lizq@snnu.edu.cn}
}

\maketitle

\begin{abstract}

Large Language Models (LLMs) have garnered considerable attention owing to their remarkable capabilities, leading to an increasing number of companies offering LLMs as services. Different LLMs achieve different performance at different costs. A challenge for users lies in choosing the LLMs that best fit their needs, balancing cost and performance.
In this paper, we propose a framework for addressing the cost-effective query allocation problem for LLMs. Given a set of input queries and candidate LLMs, our framework, named OptLLM, provides users with a range of optimal solutions to choose from, aligning with their budget constraints and performance preferences, including options for maximizing accuracy and minimizing cost. 
OptLLM predicts the performance of candidate LLMs on each query using a multi-label classification model with uncertainty estimation and then iteratively generates a set of non-dominated solutions by destructing and reconstructing the current solution. 
To evaluate the effectiveness of OptLLM, we conduct extensive experiments on various types of tasks, including text classification, question answering, sentiment analysis, reasoning, and log parsing. Our experimental results demonstrate that OptLLM substantially reduces costs by 2.40\% to 49.18\% while achieving the same accuracy as the best LLM. Compared to other multi-objective optimization algorithms, OptLLM improves accuracy by 2.94\% to 69.05\% at the same cost or saves costs by 8.79\% and 95.87\% while maintaining the highest attainable accuracy.
\end{abstract}

\begin{IEEEkeywords}
Large Language Models, Query Assignment, Multi-objective Optimization, Performance Prediction, Cost-performance Tradeoff
\end{IEEEkeywords}

\section{Introduction}
Large Language Models (LLMs) are increasingly influential due to their ability to analyze and generate natural and programming languages. These models are trained on extensive text data~\cite{kasneci2023chatgpt} and have demonstrated impressive abilities to simulate human linguistic capabilities, posing a significant impact across multiple domains such as natural language processing (NLP)~\cite{ouyang2022training,xie2022explanation,min2022rethinking} and programming tasks~\cite{liu2023your,xia2023automated,xia2023keep}. 

Many companies (such as OpenAI\footnote{\label{openai}https://openai.com}, AI21\footnote{\label{ai21}https://www.ai21.com/}, TogetherAI\footnote{\label{together}https://www.together.ai/},etc.) now offer LLMs as services through public Application Programming Interfaces (APIs). These LLM services have different performance and pricing structures. 
Despite the remarkable achievements and rapid progress of LLMs across various fields, their adoption in practice is challenging. First, utilizing LLMs can lead to considerable costs due to the large number of LLM queries required by a task~\cite{vsakota2023fly}. For instance, using GPT-3 for a simplified customer service system could potentially cost a small business over \$14,400 per month~\cite{gpt3}. 
Second, the performance of an LLM, measured by the percentage of correctly processed queries (referred to as accuracy in this paper), may not be consistent across all tasks. 
Although expansive models such as GPT-4 are expected to wield greater computational strength and deliver superior outcomes, they incur higher costs and do not universally assure satisfactory results for all types of tasks. 
Small LLMs offer economical alternatives for executing less complex tasks, especially in scenarios where the extensive capabilities of larger models are superfluous~\cite{vsakota2023fly}. Moreover, a task that may not be successfully completed by one LLM could potentially be effectively handled by another~\cite{chen2023frugalgpt}.

\begin{figure}
    \centering
    \includegraphics[width=0.95\linewidth]{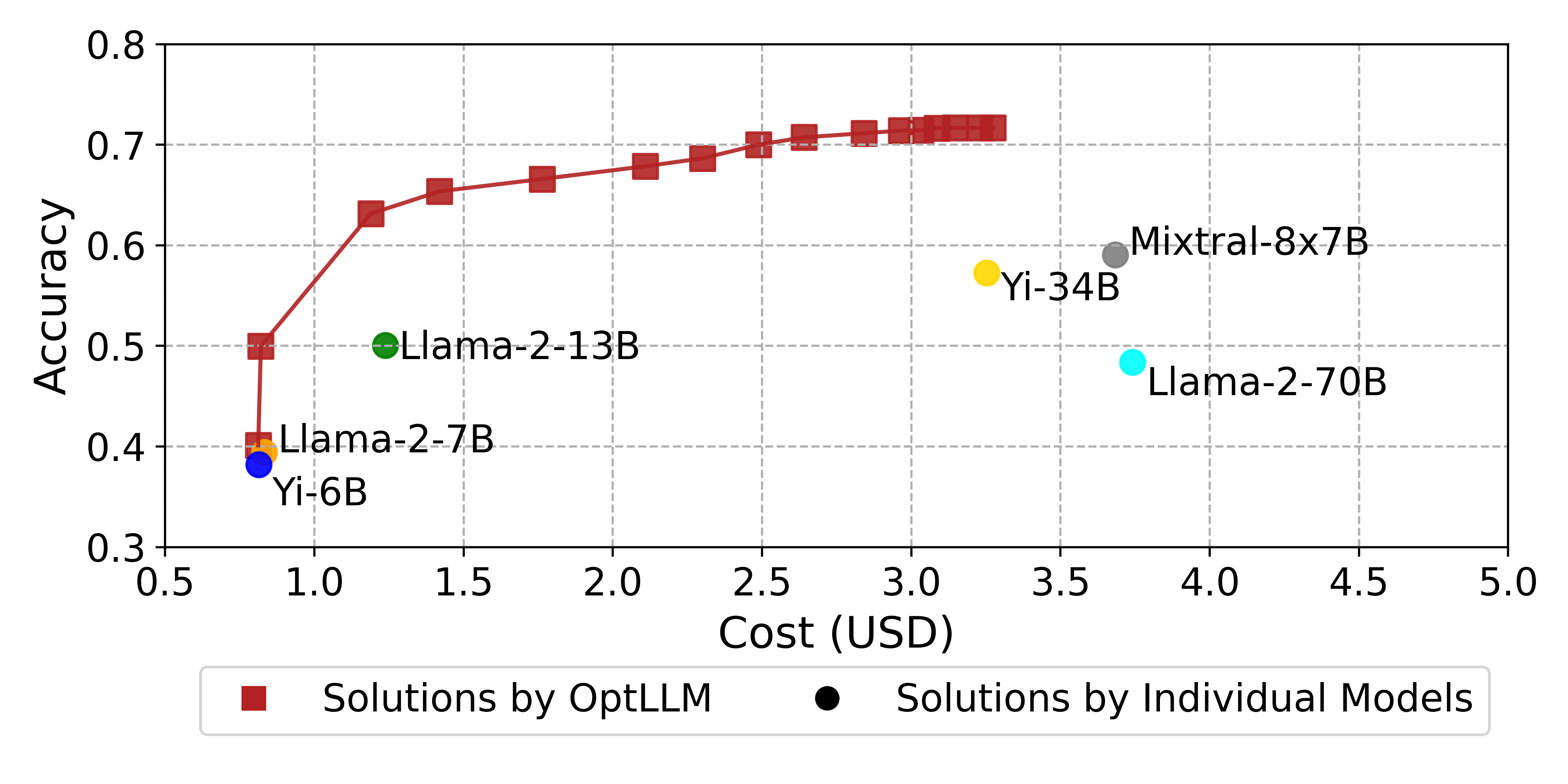}
    \caption{Comparison of using each individual LLM with solutions by OptLLM. OptLLM can provide solutions with higher accuracy and lower cost than the best individual LLM (Mixtral 8x7B) on LLM-based log parsing task} 
    \label{fig_motivation_log}
\end{figure}

Given these challenges, it is imperative to make informed decisions about selecting an appropriate model for each query when leveraging LLMs in order to achieve 
a trade-off between cost and performance expectations. Our OptLLM addresses this issue by predicting the performance of LLMs and optimizing the cost and accuracy effectively. Figure~\ref{fig_motivation_log} demonstrates the superiority of OptLLM in achieving multi-objective performance, compared with the results obtained by using six LLMs individually on a log parsing task (the task of parsing raw log messages into templates). Specifically, an LLM named Mixtral 8x7B~\cite{jiang2024mixtral} achieves the highest accuracy of 59.05\% at a cost of \$3.68, whereas OptLLM provides solutions that can reach a higher accuracy of 71.69\% at a lower cost of \$3.27. This highlights the potential of OptLLM to optimize the selection of LLMs for each query, resulting in improved performance and reduced cost. 

This paper considers the assignment of queries to an appropriate LLM, as a multi-objective optimization problem with the objectives of minimizing cost and maximizing performance. Evolutionary algorithms (EAs), commonly employed in optimization problems like scheduling, planning, design, and management~\cite{wu2023its,haq2022efficient}, offer a potential approach to address the query assignment for LLMs. However, EAs struggle with efficiency and may converge to suboptimal solutions, especially when tackling large-scale, intricate problems with limited available information, leading to prolonged computation times and difficulties in scaling~\cite{cheng2016test,he2020adaptive}.

To address the above challenges, this paper introduces OptLLM, an efficient and effective framework that selects a suitable LLM for a given set of queries in a predictable manner.  
OptLLM consists of two components: prediction and optimization.
The prediction component employs multi-label classification to predict the possibility of candidate LLMs processing each query successfully. To handle prediction uncertainty, OptLLM uses a weighted mean to aggregate bootstrap sample predictions and calculates the standard deviation across samples to quantify the uncertainty. The optimization component starts with two extreme solutions: one with the highest predicted accuracy and the other with the lowest cost. OptLLM then iteratively generates non-dominated solutions through a destruction and reconstruction process. The destruction phase temporarily prioritizes one objective, while the reconstruction phase optimizes the solution based on heuristic rules.
This optimization procedure enables OptLLM to address large-scale, intricate challenges with increased computational speed.

We demonstrate the generality by experimenting with various general NLP tasks and domain-specific tasks. The experimental results show that the solutions provided by OptLLM can reduce cost and improve accuracy simultaneously in all tasks. 
OptLLM achieves the same accuracy level as the best individual LLM while reducing costs by 2.40\% to 49.18\%. Compared to other multi-objective optimization algorithms, OptLLM offers solutions with accuracy improvements ranging from 2.94\% to 69.05\%, or cost savings between 8.79\% and 95.87\% while maintaining the highest accuracy attainable by each baseline algorithm. 

Our major contributions are summarized as follows: 
\begin{enumerate}
\item 
We treat the problem of assigning queries to LLMs as a multi-objective optimization problem, aiming to achieve a tradeoff between cost and accuracy. 
\item We propose OptLLM, an effective and efficient framework that automatically assigns queries to suitable LLMs. 
\item We demonstrate the generality and effectiveness of OptLLM through extensive experiments on various tasks. 
The results confirm that the solutions provided by OptLLM can reduce cost and improve accuracy across a wide range of tasks.
\end{enumerate}

\section{Background}
\label{sec2:Background}
\subsection{Large Language Model and Its Application}
In recent years, Large Language Models (LLMs) have emerged as powerful tools with remarkable capabilities in understanding and generating natural language text. 
Many LLMs, such as ChatGPT4 and Llama~\cite{openai2023gpt4, touvron2023llama}, have been developed to tackle various NLP tasks, such as text classification, question answering, sentiment analysis, and reasoning. For example, text classification involves assigning a predefined category to a given piece of text, as exemplified in Figure~\ref{fig:task_example}-(a), where the task is to classify a news headline into one of four categories: World, Sports, Business, or Sci/Tech.

\begin{figure}
    \centering
    \includegraphics[width=\linewidth]{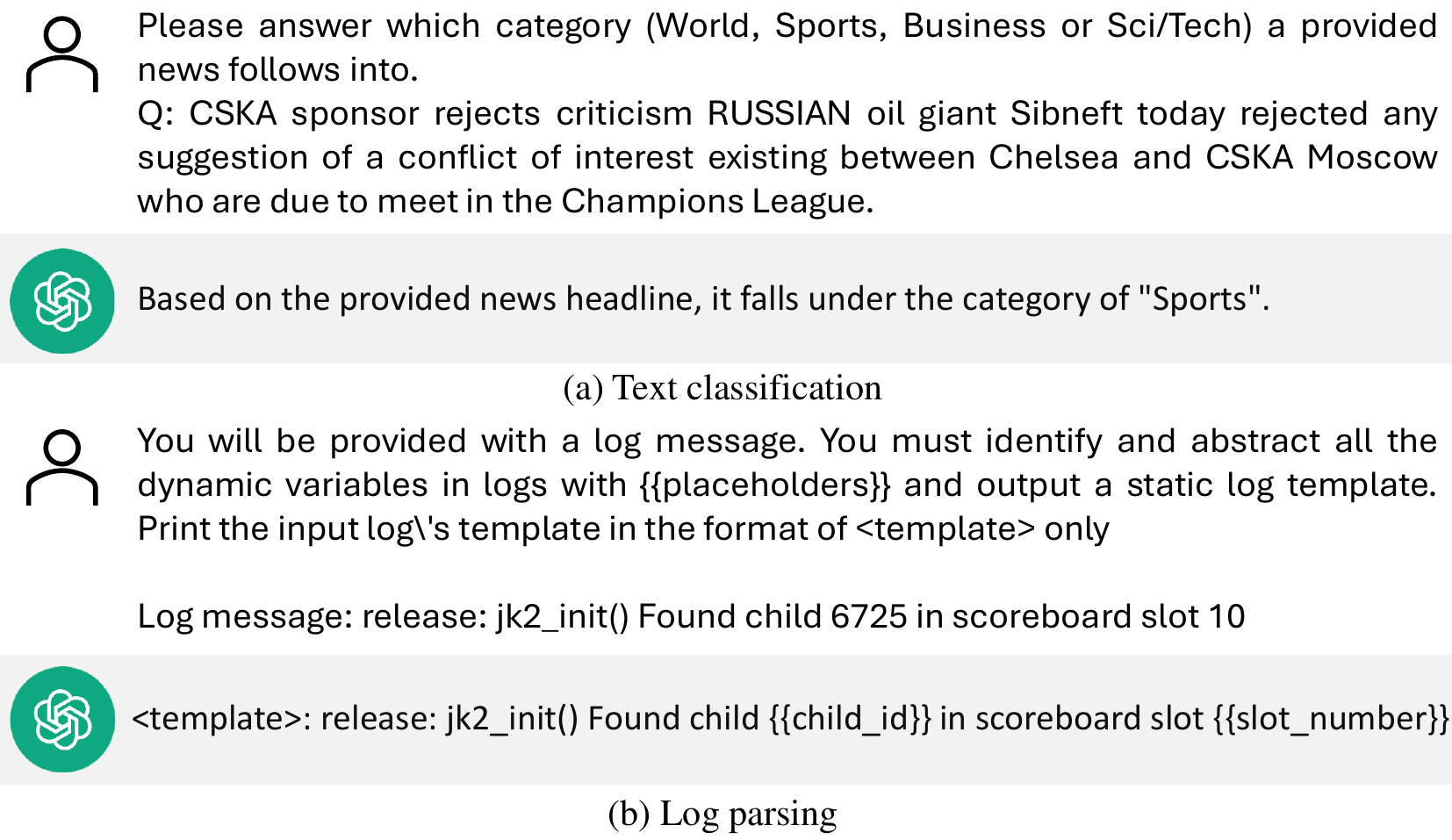}
    \caption{Examples of log parsing and text classification queries}
    \label{fig:task_example}
\end{figure}

In addition to NLP tasks, LLMs are also applied to domain-specific tasks. In this paper, we focus on LLM-based log parsing~\cite{le2023log} as a case study. Log parsing aims to extract structured information from unstructured log data generated by software systems. Log messages are system-generated lines of text containing event information. They often include dynamic variables that vary across instances of the same event type. Log parsing aims to normalize these variables by replacing them with placeholders, creating a static template that represents the common structure and content of log messages for a given event type. Figure~\ref{fig:task_example}-(b) illustrates the LLM-based log parsing process. By providing an instruction such as ``abstract variables with placeholders to extract the correct template'', we can guide the LLM to extract a template from a log message. In the example, the log message ``release: jk2\_init() Found child 6725 in scoreboard slot 10'' is parsed into the template ``release: jk2\_init() Found child \{\{child\_id\}\} in scoreboard slot \{\{slot\_number\}\}'', where the dynamic variables ``6725'' and ``10'' are replaced with the placeholders ``child\_id'' and ``slot\_number'', respectively.

\subsection{Multi-objective Optimization}
The multi-objective optimization problem (MOP) is a field of optimization concerned with problems involving multiple conflicting objectives that must be optimized simultaneously~\cite{deb2016multi}. MOP has found wide-ranging applications across various domains, including scheduling, resource allocation, and planning~\cite{ramirez2019survey}. The goal is to identify a set of Pareto-optimal solutions \cite{cheikh2010method} that represent the best possible trade-offs among the competing objectives. Formally, a multi-objective optimization problem can be formulated as:

\[ min/max\ f(x) = min/max\ (f_{1}(x), f_{2}(x), \dots, f_{M}(x))\]

where $x$ is a $\mu$-dimensional decision vector in the decision space, $f_{M}(x)$ is the $M^{th}$ objective function, and $M$ is the number of objectives.

This study employs the concept of Pareto dominance, which is the most common evaluation criterion in multi-objective optimization problems~\cite{konak2006multi}. A solution $s_{a}$ is said to dominate another solution $s_{b}$, denoted as $s_{a}\prec s_{b}$, if all objective values of $s_{a}$ are at least as good as those of $s_{b}$, and strictly better in at least one objective. Specifically, in our research context, where one objective (total cost, donated as $f_{cost}$) is to be minimized and the other (accuracy, donated as $f_{acc}$) is to be maximized. Then solution $s_{a}$ with objective values $(f_{cost}(a),f_{acc}(a))$ dominates solution $s_{b}$ with objective values $(f_{cost}(b),f_{acc}(b))$ if and only if:
\begin{itemize}
\item $f_{cost}(a) \leqslant f_{cost}(b)$ and $f_{acc}(a) > f_{acc}(b)$, or
\item $f_{cost}(a) < f_{cost}(b)$ and $f_{acc}(a) \geqslant f_{acc}(b)$
\end{itemize}

A solution $s_a$ is called a \textbf{non-dominated solution} within a set of solutions $S$ if no other solution in $S$ dominates $s_a$. A solution $s_a$ is a \textbf{Pareto-optimal solution} if no other feasible solution in the entire search space dominates $s_a$. The set of all Pareto-optimal solutions is known as the \textbf{Pareto front} or \textbf{reference point set}, representing the optimal trade-offs between the conflicting objectives. In an evolutionary algorithm, the set of all non-dominated solutions within the current population is called the \textbf{non-dominated set}. The goal of the multi-objective optimization process is to provide decision-makers with a diverse and well-balanced set of Pareto-optimal solutions, representing the best compromises among the multiple objectives.

\section{Problem Description}
\label{sec3:Problem Description}
\subsection{Problem Scope}

Numerous LLMs with different capabilities and prices are now accessible via APIs. 
OptLLM aims to identify a set of cost-effective and high-performing solutions for leveraging LLMs in query-answering tasks, encompassing both general NLP tasks (such as text classification and summarization) and domain-specific tasks (such as log parsing). As discussed in Section~\ref{sec2:Background}, multi-objective optimization problems can yield multiple Pareto-optimal solutions, unlike single-objective optimization problems typically have one solution deemed optimal. In this work, one objective is to maximize the percentage of queries processed accurately, while the other is to minimize the total cost of invoking LLM APIs.

\subsection{Problem Formulation}
Assume the user has a set of queries waiting to be processed by LLMs, represented as $J=\{j_1,j_2,\dots,j_n\}$. Each query $j_i$ is characterized by the token number $tn_i$. The user has access to a set of LLMs $L=\{l_1,l_2,\dots,l_m\}$, the price per token $price_k$ of invoking model $l_k$ is available through the LLM provider. The cost $cost_i^k$ of submitting a query $j_i$ on an LLM $l_k$ is decided by the token number of the query content and the unit price of the LLM, which is calculated as follows: 
\begin{equation}
cost_{i,k} = tn_i \times price_k 
\end{equation}

Assume $acc_{i,k}$ is the result of leveraging the LLM $l_k$ to query $j_i$, where $acc_{i,k} = 1$ represents the LLM processes the query correctly and 0 otherwise. We denote the decision variable \(x_{i,k}\), which is a binary variable that is set as 1 when query $i$ is assigned to the LLM $k$ and 0 otherwise.
This problem is then formalized as follows: 
\begin{equation}\label{obj_cost}
\text{Minimise}~f_{cost} = \sum_{i=1}^{n}cost_i
\end{equation}
\begin{equation}\label{obj_acc}
\text{Maximise}~f_{acc} = \frac{1}{n}\sum_{i=1}^{n}acc_{i}
\end{equation}
s.t.

\begin{equation}\label{job_cost}
    cost_i = \sum_{k=1}^{m}x_{i,k} \times cost_{i,k}, \forall i \in \{1, \dots, n\}
\end{equation}
\begin{equation}\label{job_acc_1}
    acc_i = \sum_{k=1}^{m}x_{i,k} \times acc_{i,k}, \forall i \in \{1, \dots, n\}
\end{equation}
\begin{equation}\label{constrain_machine}
\sum_{k=1}^{m}x_{i,k}=1, \forall i \in \{1, \dots, n\}
\end{equation}
\begin{equation}\label{constrain_x}
x_{i,k} \in \{0,1\}, \forall i \in \{1, \dots, n\}, \forall k \in \{1, \dots, m\}
\end{equation}

Equations (\ref{obj_cost}) and (\ref{obj_acc}) are the objectives to minimize total cost and improve the percentage of queries that can be processed successfully. Constraints (\ref{job_cost}) and (\ref{job_acc_1}) define the cost and accuracy for each query $i$. Constraint (\ref{constrain_machine}) is the constraint that each query should be assigned to one LLM. Constraint (\ref{constrain_x}) imposes restrictions on the decision variable.

\section{Proposed Approach}
\label{sec4OptLLM}
This section introduces OptLLM, a framework designed to efficiently allocate queries to the most suitable LLM while optimizing both cost and performance. OptLLM consists of two main components: prediction and optimization. The prediction component trains a model to estimate the probability of each LLM successfully processing a given query, while the optimization component uses the predicted probabilities to determine the optimal allocation of queries to LLMs. Figure~\ref{fig:OptLLMFramework} illustrates the architecture of OptLLM.

\begin{figure*}
    \centering
    \includegraphics[width=0.85\linewidth]
    {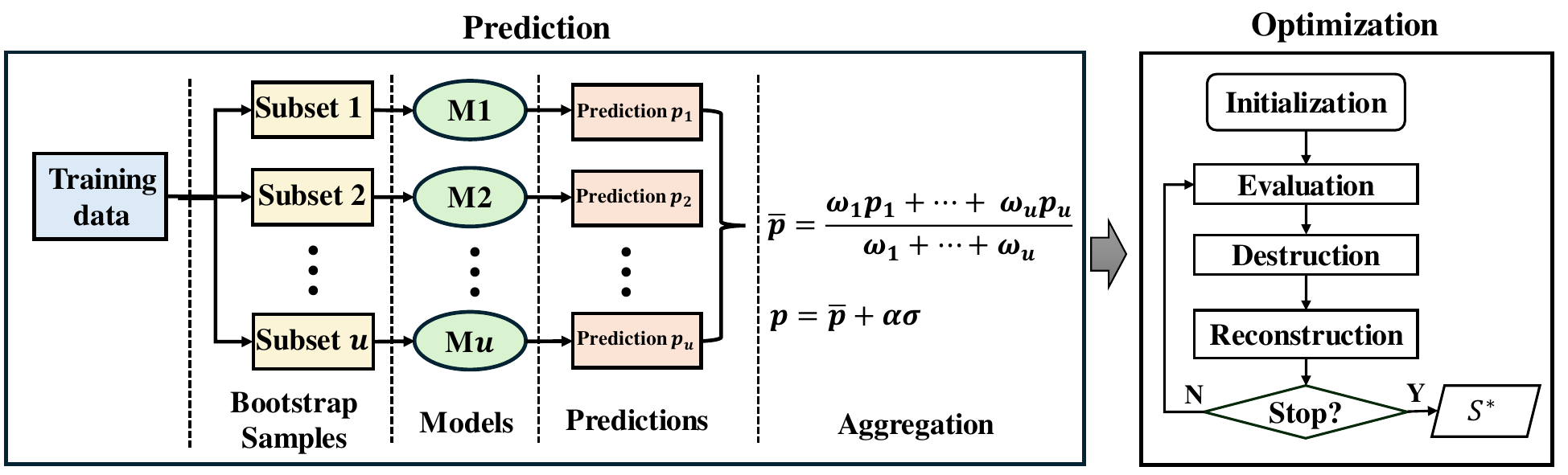}
    \caption{The Framework of OptLLM}
    \label{fig:OptLLMFramework}
\end{figure*}

\subsection{The Prediction Component}
Determining the accuracy of LLMs on a specific input is challenging, as it remains uncertain until that input is actually processed by the LLM. To address this issue, we propose a multi-label classification model that incorporates uncertainty estimation techniques to enhance the reliability and stability of the performance predictions.

\subsubsection{Bootstrap-based training and aggregation}
The multi-label classifier takes the query content as input and predicts the accuracy of each candidate LLM for that specific query.
The training set is collected by querying a small set of queries with each candidate LLM and recording their responses. Herein, a pre-trained word embedding model extracts features from the input query content, whilst the response results of the queried LLM are considered as labels.
The word embeddings, which are trained on a large corpus of text data, enable the resulting feature vectors to capture more nuanced and contextual information about the input compared to traditional bag-of-words or count-based representations. Subsequently, we construct a set of prediction models trained with an ensemble of Random Forest classifiers on multiple bootstrap samples of the training and validation data. To obtain the final predictions, we aggregate the predictions from each bootstrap sample using a weighted mean approach. 
Assume there are $u$ bootstrap samples, and for each sample, we evaluate the model's performance on the validation data by calculating the percentage of correctly predicted data, denoted as $\omega$. 
The weighted mean prediction $\bar{p}$ is calculated as follows:
\begin{equation}\label{mean_p}
\bar{p} = \frac{\omega_1 p_1 + \dots \omega_u p_u}{\omega_1 + \dots \omega_u}
\end{equation}

\subsubsection{Robust-aware predicted accuracy in aggregation}
Predictions play a crucial role in the selection process during optimization. It is important to note that when the prediction model is not highly accurate, there is a risk that OptLLM may choose a solution with higher predicted accuracy but worse real-world performance. To mitigate this risk, we employ a robust optimization in aggregation phrase, a common approach for handling uncertainty by leveraging interval or discrete data from statistics or historical observations~\cite{ben2009robust}. Specifically, we calculate the standard deviation $\sigma$ of the predictions across the bootstrap samples to quantify the uncertainty. The robust-aware predicted accuracy is defined as follows. Formally, let $p_{i,k}$ represent the predicted accuracy of query $j_i$ on LLM $l_k$. We formulate $p_{i,k}$ as:
\begin{equation}\label{robust_p}
p_{i,k} = \bar{p}_{i,k} + \alpha_{i,k} \times \sigma_{i,k}
\end{equation}
where $\bar{p}_{i,k}$ is the weighted mean prediction, and $\alpha_{i,k}$ is a parameter that controls the robustness of the predicted accuracy by adjusting the uncertainty term $\sigma_{i,k}$.

By incorporating uncertainty information into the predictions, our model can adapt its outputs based on the level of uncertainty, providing more robust and reliable results. The $\alpha$ parameter determines the robustness of the predictions by governing the impact of the uncertainty term on the final predictions. When $\alpha$ is set to 0, the uncertainty term is disregarded, and the robust predictions are equivalent to the mean predictions, indicating no additional consideration of uncertainty. Positive values of $\alpha$ increase the robustness by adding the uncertainty term to the mean predictions, with larger values resulting in more conservative predictions that account for a wider range of possible accuracies. Conversely, negative values of $\alpha$ decrease the robustness by subtracting the uncertainty term from the mean predictions, with larger negative values yielding more aggressive predictions that consider a narrower range of possible accuracies. During the search process, the accuracy objective of a candidate allocation solution is determined as follows:

\begin{equation}\label{pred_acc}
f_{acc}^{'} = \frac{1}{n} \sum_{i=1}^{n} p_i
\end{equation}

\begin{equation}\label{job_predicted_acc}
p_i = \sum_{k=1}^{m} x_{i,k} \times p_{i,k}, \quad \forall i \in \{1, \dots, n\}
\end{equation}

where $f_{acc}^{'}$ represents the predicted overall accuracy objective of the solution, and $p_i$ is the predicted accuracy for query $j_i$ based on the LLM assignments in the current solution.

\subsection{The Optimization Component}
As shown in Algorithm~\ref{alg:architecture_moo}, the optimization component consists of initialization and heuristic search phases. In the initialization phase, OptLLM generates two extreme optimal solutions $s_h$ and $s_c$ (Line 2 in Algorithm~\ref{alg:architecture_moo}), where $s_h$ has the highest predicted accuracy and $s_c$ has the lowest cost. Then, the search phase of OptLLM performs destruction and reconstruction iteratively (Lines 7-13 in Algorithm~\ref{alg:architecture_moo}) to generate a set of non-dominated solutions.

\begin{algorithm}
\small
\caption{Function \textit{Optimization}}
\label{alg:architecture_moo}
\begin{algorithmic}[1]
\Require
$P$: predicted accuracy table;
$C$: cost table;
$GN$: grid parameter
\Ensure $S^*$: non-dominated solutions output by the OptLLM;
\State $S^* \leftarrow \emptyset$ 
\State $s_{h}, s_{c} \gets$ Initialization($P, C$) 
\State $\Delta{cost} \gets \frac{f_{cost}(s_{h})-f_{cost}(s_{c})}{GN}$
\State $\Delta{acc} \gets \frac{f_{acc}(s_{h})-f_{acc}(s_{c})}{GN}$
\State $s_1$ $\leftarrow$ $s_{h}$
\State $s_2$ $\leftarrow$ $s_{c}$
\While{no termination criterion is met}
        \State $\bar{s}_1, \bar{s}_2 \gets \Call{destruction\_strategy}{(s_1, \Delta{cost}), (s_2, \Delta{ccc})}$
        \State $\hat{s}_1, \hat{s}_2 \gets \Call{reconstruction\_strategy}{\bar{s}_1, \bar{s}_2}$
        \State $S^* \gets \hat{s}_1, \hat{s}_2$
        \State $s_1 \gets \hat{s}_1$
        \State $s_2 \gets \hat{s}_2$
\EndWhile
\State \Return $S^*$
\end{algorithmic}
\end{algorithm}

\subsubsection{Initialization}
In the initialization phase, we generate two optimal solutions according to the concept of Pareto dominance. Let $s_{c}$ denote the solution obtained by assigning each query to the cheapest LLM, resulting in the lowest cost. On the other hand, $s_{h}$ represents the solution achieved by selecting the LLM with the highest predicted accuracy for each query. To prove that both $s_{c}$ and $s_{h}$ are Pareto solutions, we introduce the following lemma:
\begin{lemma} \label{lemma1}
If $s_{c}$ represents the solution with the lowest cost and $s_{h}$ represents the solution with the highest accuracy, then both $s_{c}$ and $s_{h}$ are Pareto solutions.
\end{lemma}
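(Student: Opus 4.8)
The plan is to argue by contradiction, treating the two extreme solutions $s_c$ and $s_h$ separately and invoking the two-disjunct definition of Pareto dominance stated above. First I would record the structural fact that makes the construction work: because each solution assigns exactly one LLM per query (constraint (\ref{constrain_machine})) and both objectives decompose additively across queries (Equations (\ref{job_cost}) and (\ref{job_acc_1})), minimizing cost per query yields the globally minimum $f_{cost}$, and selecting the highest-accuracy LLM per query yields the globally maximum $f_{acc}$. Thus $s_c$ attains the smallest feasible value of $f_{cost}$ and $s_h$ attains the largest feasible value of $f_{acc}$.

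For $s_c$, suppose some feasible $s'$ satisfies $s' \prec s_c$. By definition this means either $f_{cost}(s') \leqslant f_{cost}(s_c)$ with $f_{acc}(s') > f_{acc}(s_c)$, or $f_{cost}(s') < f_{cost}(s_c)$ with $f_{acc}(s') \geqslant f_{acc}(s_c)$; in both cases $f_{cost}(s') \leqslant f_{cost}(s_c)$. The strict inequality $f_{cost}(s') < f_{cost}(s_c)$ directly contradicts the minimality of $f_{cost}(s_c)$, which disposes of the second disjunct and forces the first disjunct into the tie $f_{cost}(s') = f_{cost}(s_c)$. The argument for $s_h$ is the mirror image: any dominator $s'$ must satisfy $f_{acc}(s') \geqslant f_{acc}(s_h)$, which by maximality forces $f_{acc}(s') = f_{acc}(s_h)$, leaving only the question of whether a same-accuracy solution can be strictly cheaper.

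The main obstacle is precisely these residual tie cases, since the one-line contradictions above do not by themselves exclude a different minimum-cost solution with strictly higher accuracy, nor a different maximum-accuracy solution with strictly lower cost. I would close this gap by sharpening the construction so that per-query ties are broken by the secondary objective: among the cheapest LLMs for a query $s_c$ keeps the most accurate one, and among the most accurate LLMs for a query $s_h$ keeps the cheapest one. Additive separability then guarantees that no feasible solution can match $s_c$ in cost while beating it in accuracy, nor match $s_h$ in accuracy while beating it in cost, which eliminates the remaining disjunct and completes the contradiction. In the special case where the cheapest LLM (respectively the most accurate LLM) is unique for each query, the tie cases are vacuous and the two short contradictions already establish that both $s_c$ and $s_h$ are Pareto solutions.
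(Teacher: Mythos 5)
Your proposal is correct and follows the same overall skeleton as the paper's proof --- a contradiction argument that pits the two-disjunct definition of dominance against the extremality of $f_{cost}(s_c)$ and $f_{acc}(s_h)$ --- but you go a meaningful step further on the tie cases, and that step is where the paper's own argument is actually weakest. The paper disposes of a hypothetical dominator $s$ of $s_c$ by asserting $f_{cost}(s_c) < f_{cost}(s)$ outright, which silently assumes the minimum-cost solution is strictly cheaper than every other solution; minimality only gives $f_{cost}(s_c) \leq f_{cost}(s)$, so the disjunct ``$f_{cost}(s) \leq f_{cost}(s_c)$ and $f_{acc}(s) > f_{acc}(s_c)$'' is not excluded when ties in cost exist (and symmetrically for $s_h$). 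Indeed, if two LLMs charge the same price for some query but differ in accuracy, the phrase ``the solution with the lowest cost'' is ambiguous and a bad tie-break yields an $s_c$ that \emph{is} dominated, so the lemma as literally stated needs your refinement to be true in general. Your proposal identifies exactly this residual case, and closes it by exploiting the additive separability of both objectives (Equations (\ref{job_cost}) and (\ref{job_acc_1}) together with constraint (\ref{constrain_machine})) to justify a per-query secondary tie-break: among the cheapest LLMs for each query keep the most accurate, and among the most accurate keep the cheapest. What your approach buys is a proof that is actually airtight and that doubles as a correct specification of how the initialization phase should break ties; what the paper's shorter version buys is brevity, at the price of an unstated uniqueness assumption.
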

\begin{proof}
If $s_{c}$ is not a Pareto solution, then there must exist a solution $s$ such that $s$ dominates $s_{c}$ in at least one objective while being no worse in the other objective. Formally, $s \prec s_{c}$, where ``$\prec$'' indicates that solution $s$ dominates solution $s_{c}$.

Let ${f}_{cost}$ denote the cost objective and ${f}_{acc}$ denote the accuracy objective. If $s \prec s_{c}$, then ${f}_{cost}(s) < {f}_{cost}(s_{c})$ and ${f}_{acc}(s) \geq {f}_{acc}(s_{c})$, or ${f}_{cost}(s) \leq {f}_{cost}(s_{c})$ and ${f}_{acc}(s) > {f}_{acc}(s_{c})$. However, $s_{c}$ is the solution with the lowest cost, where ${f}_{cost}(s_{c}) < {f}_{cost}(s)$. This contradicts the definition of dominance, as $s_{c}$ should be at least as good as $s$ in all objectives and strictly better in at least one objective.

Similarly, $s_{h}$ represents the solution with the highest accuracy, where ${f}_{acc}(s_{h}) > {f}_{acc}(s)$ for any solution $s$. According to the definition of Pareto dominance, there is no solution dominating $s_{c}$ or $s_{h}$. Therefore, both $s_{c}$ and $s_{h}$ are Pareto solutions. This completes the proof.
\end{proof}

\subsubsection{The heuristic search}
During the heuristic search phase, OptLLM employs a process of destruction and reconstruction to iteratively generate a set of non-dominated solutions. During the destruction stage, OptLLM removes elements (queries have been assigned) from the current solution and reassigns them to maximize one objective while temporarily disregarding the other. This approach allows for a focused optimization of the prioritized objective. There are two search directions. Assume $s_h$ is the solution with the highest expected accuracy with a relatively high associated cost. In this case, OptLLM first destructs $s_h$ to release as much cost as possible. Conversely, $s_c$ represents the cheapest solution with a lower accuracy. During the destruction, OptLLM prioritizes increasing accuracy to the greatest extent possible by temporarily disregarding the cost objective. Subsequently, the reconstruction stage refines the solution obtained from destruction, transforming it into a non-dominated solution using a predefined scoring function that balances both objectives. For instance, as illustrated in Figure \ref{fig:assignment_multi}, solution $B$ is a feasible solution derived from the non-dominated solution $A$ by releasing sufficient cost. Subsequently, solution $C$ is obtained by reconstructing solution $B$, resulting in a new non-dominated solution.

\begin{figure}
    \centering
    \includegraphics[width=1\linewidth]{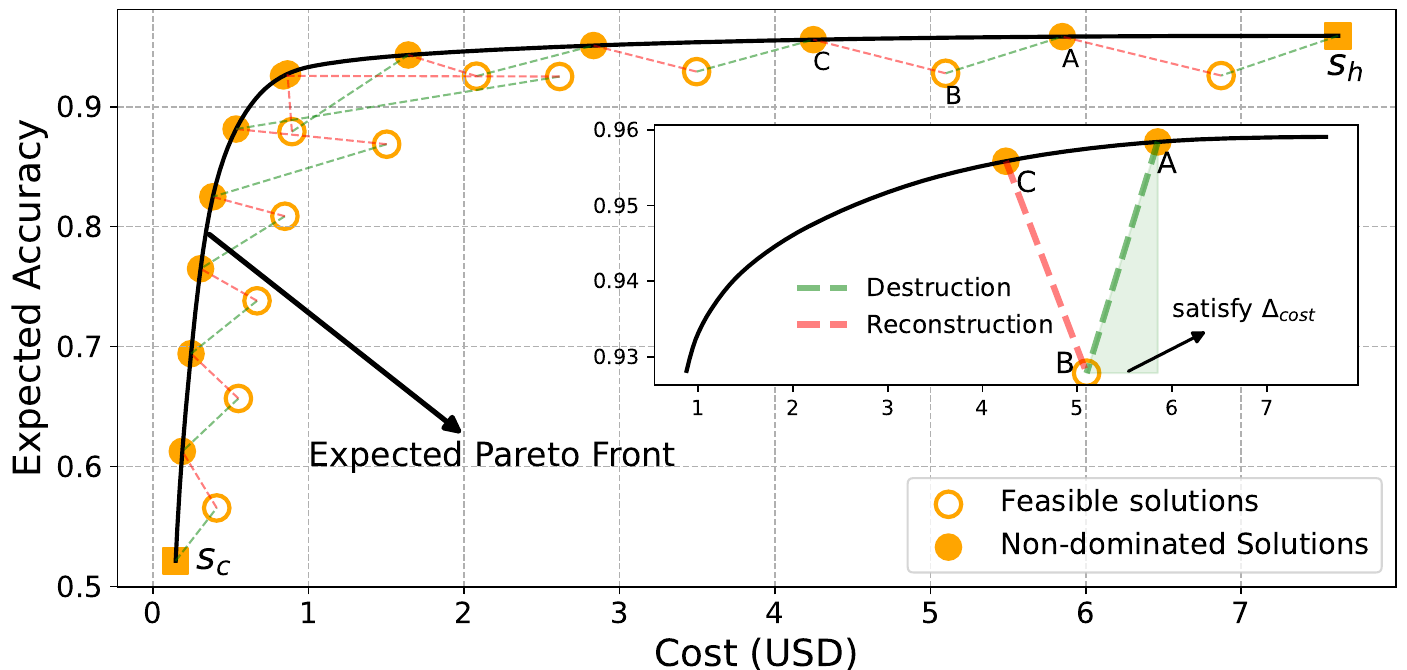}
    \caption{Solutions by destruction and reconstruction in optimization component. $B$ is a feasible solution derived from the non-dominated solution $A$ by releasing sufficient cost, and $C$ is a new non-dominated solution obtained by reconstructing $B$.}
    \label{fig:assignment_multi}
\end{figure}

\noindent\textbf{Destruction.}
Let $\Hat{s}$ denote the solution obtained after the destruction phase. Taking optimizing cost objective as an example, $\Hat{s}$ is expected to achieve cost savings of $gap_{cost}$ compared to the original solution $s$, expressed as $f_{cost}(\Hat{s})-f_{cost}(s)\geq gap_{cost}$ (Lines 2-6 in Algorithm~\ref{alg:destruction}). The algorithm iteratively reassigns queries, beginning with those offering the largest cost reductions, until the cost of the solution $\Hat{s}$ meets the specified condition. For a query $j_i$, if its current assignment is LLM $k'$ and will be reassigned to $k$, the cost difference $cs_{i,k}$ is calculated as follows:
\[ cs_{i,k} = cost_{i,k} - cost_{i,k'}, k\neq k'
 \]
For each query $j_i$, $cs_i$ represents the largest cost savings achievable by selecting another LLM, calculated as:
\[ cs_i = \min \{ cs_{i,1}, cs_{i,2}, \dots, cs_{i,m}\} \]

Similarly, if the goal is to optimize accuracy through destruction, $\Hat{s}$ is expected to achieve an accuracy improvement of $gap_{acc}$ compared to the original solution $s$ (Lines 8-12 in Algorithm~\ref{alg:destruction}). This is expressed as $f_{acc}(\Hat{s})-f_{acc}(s)\geq gap_{acc}$. The accuracy improvement ($ai_{i,k}$) associated with reallocating query $j_i$ from LLM $k'$ to LLM $k$ is defined as follows:
\[ ai_{i,k} = acc_{i,k} - acc_{i,k'}, k\neq k' \]
Let $ai_i$ denote the maximum accuracy improvement achievable by selecting an alternative LLM for query $j_i$, calculated as:
\[ ai_i = \max \{ ai_{i,1}, ai_{i,2}, \dots, ai_{i,m}\} \]

\begin{algorithm}
\small
\caption{Function \textit{Destruction}}
\label{alg:destruction}
\begin{algorithmic}[1]
\Require
$gap_{cost}$ (or $gap_{acc}$): cost (or accuracy) gap
\Ensure $\Hat{s}$: solution output by the function \textit{Destruction}
\If{$gap_{cost}$}
\While {$f_{cost}-\Hat{f}_{cost}<gap_{cost}$} 
\State $i=Min(cs_i)$ \Comment{query with the largest cost reduction}
\State $\Hat{s} \leftarrow$ reallocation($s, i$)
\State $s \gets \Hat{s}$
\EndWhile
\ElsIf{$gap_{acc}$}
\While {$\Hat{f}_{acc}-f_{acc}<gap_{acc}$} 
\State $i=Max(ai_i)$ \Comment{query with the largest accuracy improvement}
\State $\Hat{s} \leftarrow$ reallocation($s, i$)
\State $s \gets \Hat{s}$
\EndWhile
\EndIf
\State \Return $\Hat{s}$
\end{algorithmic}
\end{algorithm}

\begin{algorithm}
\small
\caption{Function \textit{Reconstruction}}
\label{alg:reconstruction}
\begin{algorithmic}[1]
\Require $\Hat{s}$: solution output by the function \textit{Destruction}
\Ensure $\Bar{s}$: solution output by the function \textit{Reconstruction}
 
\State $i_1 \leftarrow$ query with the largest positive score $score_{i_1}$ in $s$
\State $i_2 \leftarrow$ query with the largest negative score $score_{i_2}$ in $s$
\While{$|score_{i_1}| > |score_{i_2}|$ and $score_{i_1} \cdot score_{i_2} < 0$}
    \State $\hat{s} \leftarrow$ reallocation($s, i_1, i_2$)
    \If{$(s \nsucc \hat{s}) \land (\hat{s} \nsucc s)$} \Comment{\textit{Neither $\hat{s}$ nor $s$ dominates each other}} 
        \State $\hat{S} \leftarrow \hat{S} \cup \{\hat{s}\}$
    \ElsIf{$\hat{s} \prec s$}
        \State $\hat{S}\leftarrow \{\hat{s}\} \cup (\hat{S} \setminus \{s\})$ \Comment{\textit{replace $s$ with $\hat{s}$}}     
    \EndIf
    \State $s \leftarrow \hat{s}$
    \State update $i_1, i_2, score_{i_1}, score_{i_2}$
\EndWhile
\State \Return $\hat{S}$
\end{algorithmic}
\end{algorithm}
\noindent\textbf{Reconstruction.}
Then, we perform reconstruction to optimize solution \(\hat{s}\). During this process, the objective is to maximize the accuracy improvement per unit cost. In other words, for a given solution, a higher ratio of accuracy improvement to cost increase is considered more desirable. Based on this thought, we introduce the following lemma.

\begin{lemma} \label{lemma2}
Assume there are two solutions \(s_1\) and \(s_2\). If \(\frac{f_{acc}(s_1)}{f_{cost}(s_1)} \leq \frac{f_{acc}(s_2)}{f_{cost}(s_2)}\), then \(s_1 \nprec s_2 \).
\end{lemma}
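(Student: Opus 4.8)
The plan is to prove the contrapositive. Rather than assuming the ratio inequality and arguing directly that $s_1 \nprec s_2$, I would assume that $s_1$ \emph{does} dominate $s_2$ (that is, $s_1 \prec s_2$) and derive the strict inequality $\frac{f_{acc}(s_1)}{f_{cost}(s_1)} > \frac{f_{acc}(s_2)}{f_{cost}(s_2)}$, which contradicts the hypothesis. This formulation is cleaner because the dominance relation is defined in Section~\ref{sec2:Background} by a pair of explicit case conditions that are easy to expand, whereas the negation $\nprec$ is awkward to manipulate head-on.

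First I would fix notation, writing $a_j = f_{acc}(s_j)$ and $c_j = f_{cost}(s_j)$ for $j \in \{1,2\}$, and record the standing facts that every cost is a sum of strictly positive per-query costs (so $c_1, c_2 > 0$) and that accuracy values are nonnegative (so $a_1, a_2 \geq 0$). These sign conditions are exactly what license the division-based argument that follows, so I would state them up front.

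Next I would unfold $s_1 \prec s_2$ into its two defining cases and treat each separately. In the first case, $c_1 \leq c_2$ and $a_1 > a_2$; since $a_1 > a_2 \geq 0$ forces $a_1 > 0$, I can chain $\frac{a_1}{c_1} \geq \frac{a_1}{c_2} > \frac{a_2}{c_2}$, using $c_1 \leq c_2$ for the first (non-strict) step and $a_1 > a_2$ for the second (strict) step. In the second case, $c_1 < c_2$ and $a_1 \geq a_2$; here I chain $\frac{a_1}{c_1} > \frac{a_1}{c_2} \geq \frac{a_2}{c_2}$, now drawing strictness from the cost inequality $c_1 < c_2$. In either case the composite inequality is strict, yielding $\frac{a_1}{c_1} > \frac{a_2}{c_2}$ and the required contradiction.

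The main subtlety, and the step I would scrutinize, is the strictness bookkeeping together with a degenerate boundary case. In the second case the strict step $\frac{a_1}{c_1} > \frac{a_1}{c_2}$ needs $a_1 > 0$; if $a_1 = 0$, then $a_2 = 0$ as well, both ratios collapse to $0$, and $s_1$ could dominate $s_2$ on cost alone while the ratios are merely equal. I would therefore either state the lemma under the mild assumption $f_{acc}(s_1) > 0$ (which holds whenever at least one query is answered correctly, the only regime of practical interest) or note explicitly that the vacuous all-zero-accuracy configuration is excluded. Handling this boundary cleanly is the one place the argument needs care; everything else is a routine chain of inequalities justified by the positivity of the cost terms.
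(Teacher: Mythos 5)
Your proof takes essentially the same route as the paper's: it establishes the contrapositive by unfolding $s_1 \prec s_2$ into the two cases of the dominance definition and chaining the ratio inequalities $\frac{a_1}{c_1} \geq \frac{a_1}{c_2} > \frac{a_2}{c_2}$ (respectively $\frac{a_1}{c_1} > \frac{a_1}{c_2} \geq \frac{a_2}{c_2}$), exactly as the paper does. Your observation about the degenerate case $f_{acc}(s_1)=0$ is a genuine refinement rather than a deviation: the paper's proof silently relies on the strict step $\frac{a_1}{c_1} > \frac{a_1}{c_2}$ in the second case, which fails when both accuracies are zero, so your proposal to add the mild hypothesis $f_{acc}(s_1)>0$ (or to exclude the all-zero-accuracy configuration explicitly) patches a small gap that the paper's own argument leaves open.
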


\begin{proof}
If \(s_1 \prec s_2\), based on the definition, there are two possible situations:
1) \(f_{cost}(s_1) \leq f_{cost}(s_2)\) and \(f_{acc}(s_1) > f_{acc}(s_2)\).
2) \(f_{cost}(s_1) < f_{cost}(s_2)\) and \(f_{acc}(s_1) \geq f_{acc}(s_2)\). 

In the first situation, if \(f_{acc}(s_1) > f_{acc}(s_2)\), then \(\frac{f_{acc}(s_1)}{f_{cost}(s_2)} > \frac{f_{acc}(s_2)}{f_{cost}(s_2)}\). Because \(f_{cost}(s_1) \leq f_{cost}(s_2)\), we can easily have \(\frac{f_{acc}(s_1)}{f_{cost}(s_1)} \geq \frac{f_{acc}(s_1)}{f_{cost}(s_2)} > \frac{f_{acc}(s_2)}{f_{cost}(s_2)}\). Similarly, in the second situation, we can conclude that \(\frac{f_{acc}(s_1)}{f_{cost}(s_1)} > \frac{f_{acc}(s_1)}{f_{cost}(s_2)} \geq \frac{f_{acc}(s_2)}{f_{cost}(s_2)}\).

Therefore, we can have if \(s_1 \prec s_2 \), then \(\frac{f_{acc}(s_1)}{f_{cost}(s_1)} > \frac{f_{acc}(s_2)}{f_{cost}(s_2)}\). This is logically equivalent to: if \(\frac{f_{acc}(s_1)}{f_{cost}(s_1)} \leq \frac{f_{acc}(s_2)}{f_{cost}(s_2)}\), then \(s_2 \nprec s_1 \). This completes the proof.
\end{proof}

Based on Lemma~\ref{lemma2}, we reassign the queries in \(\hat{s}\) until the quotient of accuracy and cost (i.e., \(\frac{f_{acc}}{f_{cost}}\)) cannot be increased further. Let $s$ be the original solution and $s'$ be the solution after the reassignment. The difference between $s$ and $s'$ is caused by the reassignment of a query or a set of queries. By iteratively reassigning queries and applying the criterion from Lemma~\ref{lemma2}, we arrive at a non-dominated solution $\bar{s}$ that cannot be further improved, indicating that there is no new solution possible that can dominate the current solution.

\section{Experimental Design}
\label{sec5:Experimental Design}
\subsection{Research Questions}
We aim to answer the following research questions (RQs) through experiments.

\noindent RQ1: Can OptLLM allocate queries to LLMs effectively and efficiently?

\noindent RQ2: Does each core component of OptLLM contribute to the overall performance?

\noindent RQ3: How do different hyper-parameter settings affect the performance of OptLLM?

\begin{table}[ht]
\caption{Benchmark specification}
\label{tb_NLPjobs}
\centering
\resizebox{\columnwidth}{!}{%
\begin{tabular}{ccccc}
\hline
Dataset & Task Type & \begin{tabular}[c]{@{}c@{}}Train size \\ (1\%)\end{tabular} & \begin{tabular}[c]{@{}c@{}}Val size\\ (1\%)\end{tabular} & \begin{tabular}[c]{@{}c@{}}Test size \\ (98\%)\end{tabular} \\ \hline
LogPai & Log parsing & 320 & 320 & 31360 \\
AGNEWS & Text classification & 76 & 76 & 7448 \\
COQA & Question answering & 80 & 80 & 7822 \\
HEADLINES & Sentiment analysis & 100 & 100 & 9800 \\
SCIQ & Reasoning & 117 & 117 & 11443 \\ \hline
\end{tabular}%
}
\end{table}
\subsection{Benchmark Setting}
\subsubsection{Natural Language Processing (NLP) tasks}
To show the generality of OptLLM on different types of tasks, we have chosen four NLP tasks, including text classification(AGNEWS~\cite{zhang2015character}), question answering (COQA~\cite{reddy2019coqa}), sentiment analysis (HEADLINES~\cite{sinha2021impact}),  and reasoning( SCIQ~\cite{welbl2017crowdsourcing}). 12 candidate LLMs are selected from 4 mainstream providers: OpenAI\textsuperscript{\ref{openai}} (GPT-Curie, ChatGPT, GPT-3, and GPT-4), AI21\textsuperscript{\ref{ai21}} (Jurassic-1 Large, Jurassic-1 Grande, and Jurassic-1 Jumbo), Cohere\footnote{https://cohere.com/} (Xlarge and Medium), and Textsynth~\footnote{https://textsynth.com/} (GPT-J, FAIRSEQ, and GPT-Neox).
The raw data is provided by Chen et al. ~\cite{chen2023frugalgpt}, which contains the inputs (prompts) sent to the LLMs, ground truth references, LLM outputs, and cost. 

\subsubsection{Domain-specific tasks}
Furthermore, we have chosen an intelligent software engineering (SE) task, specifically focusing on LLM-based log parsing. We utilize log data sourced from the LogPai benchmark~\cite{zhu2019tools, khan2022guidelines} to interface with 8 LLMs, including TogertherAI\textsuperscript{\ref{together}} (Mixtral-8x7B, Llama-2-7B, Llama-2-13B, Llama-2-70B, Yi-34B, Yi-6B), AI21\textsuperscript{\ref{ai21}}(Jurassic-2 Mid and Jurassic-2 Ultra). 
The LogPai benchmark consists of log data from 16 systems, including distributed systems, supercomputers, operating systems, mobile systems, server applications, and standalone software. The raw data includes inputs (queries and full prompts) sent to the LLMs, ground truth references, LLM outputs, and the corresponding execution cost. 
The details of the datasets are listed in Table~\ref{tb_NLPjobs}.

\subsection{Baselines}

\subsubsection{Individual LLM} Because it is common practice to assign a task to a specific LLM, we select individual LLMs as baselines to compare against OptLLM's performance. 
We submit the entire set of queries to each individual LLM and assess the resulting cost and the proportion of queries that have been successfully completed. 

\subsubsection{Classic multi-objective optimization algorithms}
As shown in Algorithm~\ref{alg:architecture_moo}, OptLLM utilizes a heuristic search-based algorithm in optimization. We compare the effectiveness of this algorithm with well-known multi-objective optimization algorithms, including the Non-dominated Sorting Genetic Algorithm (NSGA-\rom{2})\cite{deb2002fast}, Multi-objective Particle Swarm Optimisation (MOPSO)\cite{coello2002mopso}, and Multi-objective Evolutionary Algorithm with Decomposition (MOEA/D)\cite{zhang2007moea}. These three algorithms have been extensively studied and have proven to be effective in solving a wide range of multi-objective optimization problems~\cite{ramirez2019survey, chen2018beyond}. In addition, three variants of classic algorithms are also compared, including R-NSGA-\rom{2}\cite{deb2006reference}, SMS-EMOA~\cite{beume2007sms}, and MOEA/D-GEN~\cite{wang2018scalable}. It is important to note that all the evaluated multi-objective optimization algorithms are integrated with the same prediction component as OptLLM, to enable a fair comparison of the optimization strategies. 

\subsection{Evaluation Metrics}
\subsubsection{Evaluating single solution performance}
When evaluating the performance of a single solution, such as assigning all queries to a single LLM, directly comparing the optimization objectives is feasible. This approach allows for a straightforward assessment of the solution's effectiveness in terms of cost and accuracy. Two objectives considered in this evaluation are:
\begin{itemize}
    \item $f_{cost}$: the total cost of invoking LLM APIs (see Equation~\ref{obj_cost})
    \item $f_{acc}$: the percentage of jobs that have been processed correctly (see Equation~\ref{obj_acc})
\end{itemize}

\subsubsection{Multi-objective optimization evaluation metrics}
To evaluate the performance of multi-objective optimization algorithms, we introduce widely used metrics: IGD and $\Delta$~\cite{audet2021performance}. The metrics require knowledge of the Pareto front, which represents the set of optimal trade-off solutions. We obtain the Pareto front through an exhaustive enumeration search, which serves as a reference for calculating the evaluation metrics.

\textbf{True Pareto Front Generation:}
Based on the definition of Pareto dominance, an extremely optimal solution $s_{cheapest}$ can be easily obtained, where all jobs are allocated to the cheapest LLM to achieve the lowest cost. Then, we generate a new solution that is not dominated by $s_{cheapest}$. Specifically, we calculate the cost required by a new LLM to process the job correctly for all the jobs that are not solved in the current solution. The job with the smallest cost is reallocated to get a new optimal solution. Repeat this process until the accuracy cannot be increased.

\textbf{Inverted Generational Distance (IGD)}: The IGD measures the distance between the obtained solution set and the true Pareto front (or the reference set), evaluating the quality of the obtained solution set~\cite{audet2021performance}. Let the solution set $\Lambda=\{y_1, y_2, \dots, y_{|\Lambda|}\}$ be the Pareto front for the problem and $\hat{\Lambda}=\{\hat{y}_1, \hat{y}_2, \dots, \hat{y}_{|\hat{\Lambda}|}\}$ be the solutions (approximation) obtained by an algorithm. We can define IGD by:
\[
IGD(\hat{\Lambda}, \Lambda) = \frac{1}{|\Lambda|} \sqrt{\sum_{y \in \Lambda} \left(\min_{\hat{y} \in \hat{\Lambda}} d(\hat{y}, y)\right)^2}
\]
where \(|\Lambda|\) denotes the number of solutions in the true Pareto front (or the reference set). \(d(\hat{y}, y)\) is the Euclidean distance (or any other appropriate distance metric) between a solution \(\hat{y}\) in the obtained solution set \(\hat{\Lambda}\) and the nearest solution \(y\) in the true Pareto front \(\Lambda\). A lower value of IGD means a better performance, indicating the obtained solution set has a better distribution and better approximation to the reference set. 

\textbf{$\Delta$ metric}: The $\Delta$ metric assesses the diversity and uniformity of the solutions distribution along the Pareto front by measuring Euclidean distances between consecutive solutions and comparing them to the average distance~\cite{audet2021performance}.
\[
\Delta(\hat{\Lambda}) = \frac{d_{f} + d_{l} + \sum_{z=1}^{|\hat{\Lambda}|-1} |d_{z} - \Bar{d}|}{d_{f} + d_{l} + (|\hat{\Lambda}|-1)\Bar{d}}
\]

where $d_{f}$ and $d_{l}$ are the Euclidean distances between the extreme solutions and their nearest neighbors in the obtained solution set $\hat{\Lambda}$. 
$d_z$ denotes the Euclidean distance between the $z$th solution $\hat{y}_z$ and its next solution $\hat{y}_{z+1}$ in the obtained solution set $\hat{\Lambda}$, sorted in ascending order of a chosen objective. $\Bar{d}$ represents the mean value of the distances $d_z$. A smaller $\Delta$ metric indicates a higher diversity and distribution of solutions.

To ensure a fair comparison, all multi-objective algorithms have the same termination condition of 200 iterations. We also record the execution time (in minutes) required by each algorithm to generate the solution sets within this fixed number of iterations to evaluate the computational efficiency.

\subsection{Implementation}
To mitigate the impact of randomness and ensure the robustness of our findings, each experiment is conducted 10 times and the average value is reported. The implementations of all algorithms are based on Python 3.11. Specifically, we utilize the standard versions of NSGA-\rom{2}, R-NSGA-\rom{2}, and SMS-EMOA from the Pymoo library~\cite{pymoo}, while MOPSO and MOEA/D are obtained from the Pygmo library.

\textbf{Accuracy table:} The training and validation sets each comprise 1\% of the data, while the remaining data is used for testing. All multi-objective optimization algorithms use the same predicted accuracy generated by the prediction component for evaluation during the search process.

\textbf{Parameter setting:} Optuna, a widely used hyperparameter optimization package~\cite{akiba2019optuna}, is employed to ensure the effectiveness and efficiency of all algorithms. We conduct parameter tuning using Optuna to choose optimal parameter settings for all algorithms. For OptLLM, bootstrap sample number $\mu$ is set as 100, $GN$ is set as 50, and $\alpha$ is set as 0.5. Due to space constraints, the results and optimal parameters for all baselines are provided on our project webpage\footnote{\label{github}https://github.com/superyue72/OptLLM}.

\section{Experimental Results}
\label{sec6:Experimental Results}
\subsection{RQ1: Comparison with the Baselines}
\subsubsection{Comparison with individual LLMs}
OptLLM generates a set of solutions rather than a single solution. For comparison purposes, we select the solution from OptLLM that achieves the same accuracy as the best-performing individual LLM for each dataset. Table~\ref{tb_comparison_single} presents the cost savings achieved by OptLLM compared to the best individual LLM. For instance, on the AGNEWS dataset, GPT-4 outperforms other individual LLMs, attaining an accuracy of 0.90 at a cost of 126.58. OptLLM provides a solution with the same accuracy but at a lower cost of 75.77, resulting in a 40.14\% cost saving. Across all datasets, OptLLM maintains the same level of accuracy as the best individual LLM while significantly reducing costs. The cost savings range from 2.40\% for the SCIQ dataset to 49.18\% for the LogPai dataset, demonstrating OptLLM's ability to achieve comparable performance to the best individual LLM while offering cost savings.

\begin{center}
\vspace{-12pt}
\begin{table}[ht]
\caption{Cost savings by OptLLM compared with the individual LLM ($f_{cost}$ and $f_{acc}$ are the cost and accuracy of the solution )}
\label{tb_comparison_single}
\centering
\resizebox{\columnwidth}{!}{%
\begin{tabular}{|c|ccc|cc|}
\hline
\multirow{2}{*}{Dataset} & \multicolumn{3}{c|}{Best Individual LLM} & \multicolumn{2}{c|}{Cost to reach the same accuracy} \\ \cline{2-6} 
 & Model & $f_{acc}$ & $f_{cost}$ & $f_{cost}$ & Cost Savings \\ \hline
AGNEWS & GPT-4 & 0.90 & 126.58 & 75.77 & 40.14\% \\
COQA & GPT-4 & 0.27 & 216.01 & 152.63 & 29.34\% \\
HEADLINES & GPT-4 & 0.86 & 65.28 & 40.91 & 37.33\% \\
SCIQ & Xlarge & 0.71 & 144.86 & 141.39 & 2.40\% \\
LogPai & Mixtral-8x7B & 0.59 & 3.68 & 1.87 & 49.18\% \\ \hline
\end{tabular}%
}
\end{table}
\vspace{-18pt}
\end{center}

\subsubsection{Comparison with classic multi-objective optimization algorithms}
Table~\ref{tb_comparison_perf} presents the best solution with the highest accuracy generated by all algorithms. The results demonstrate that OptLLM consistently provides solutions with the highest accuracy among all the compared methods. For instance, OptLLM can generate a solution reaching an accuracy of 0.71, while the baseline algorithms can only provide solutions with an accuracy of around 0.45. Furthermore, Table~\ref{tb_comparison_saving} illustrates the cost reduction achieved by OptLLM while maintaining the same accuracy as the highest accuracy attained by the baseline methods. Take the AGNEWS dataset as an example, MOPSO, the second-best performer among all the algorithms, achieves an accuracy of 82.18\% at a cost of 51.15 dollars. While OptLLM attains the same accuracy for only 5.1 dollars, resulting in a substantial cost saving of 90.03\%. This significant cost reduction demonstrates OptLLM's ability to provide high-performing solutions under strict cost constraints, making it a more cost-effective option compared to the baseline methods.

Table~\ref{tb_comparison_metric} presents the IGD and $\Delta$ of the solutions generated by OptLLM and other multi-objective optimization methods on the benchmarks. OptLLM achieves the best performance on all benchmark instances. The IGD value of the solutions found by OptLLM is much smaller than that achieved by its competitors. For example, on the AGNEWS dataset, OptLLM can achieve an IGD of 0.13, while the second best value is 11.61 by MOPSO. This indicates that the solutions generated by OptLLM are much closer to the Pareto Front, with a lower cost and higher accuracy. In terms of $\Delta$, OptLLM can obtain the smallest value in all scenarios, showing the best diversity of the obtained solution set. In terms of computation time, to generate the same number of solutions, OptLLM is around 5 times faster than its competitors. 

\begin{center}
\begin{table*}[]
\caption{The solution with the highest accuracy by all algorithms}
\label{tb_comparison_perf}
\center
{%
\centering
\begin{tabular}{|c|cc|cc|cc|cc|cc|}
\hline
Dataset & \multicolumn{2}{c|}{AGNEWS} & \multicolumn{2}{c|}{COQA} & \multicolumn{2}{c|}{HEADLINES} & \multicolumn{2}{c|}{SCIQ} & \multicolumn{2}{c|}{LogPai} \\ \hline
Alg & $f_{acc}$ & IMPROV & $f_{acc}$ & IMPROV & $f_{acc}$ & IMPROV & $f_{acc}$ & IMPROV & $f_{acc}$ & IMPROV \\ \hline
OptLLM & \textbf{0.90} & \textbackslash{} & \textbf{0.27} & \textbackslash{} & \textbf{0.86} & \textbackslash{} & \textbf{0.70} & \textbackslash{} & \textbf{0.71} & \textbackslash{} \\
NSGA-II & 0.74 & 21.62\% & 0.21 & 28.57\% & 0.82 & 4.88\% & 0.68 & 2.94\% & 0.42 & 69.05\% \\
MOPSO & 0.82 & 9.76\% & 0.23 & 17.39\% & 0.80 & 7.50\% & 0.68 & 2.94\% & 0.48 & 47.92\% \\
MOEA/D & 0.70 & 28.57\% & 0.22 & 22.73\% & 0.78 & 10.26\% & 0.67 & 4.48\% & 0.46 & 54.35\% \\
RNSGA-II & 0.77 & 16.88\% & 0.22 & 22.73\% & 0.79 & 8.86\% & 0.67 & 4.48\% & 0.46 & 54.35\% \\
SMS-EMOA & 0.76 & 18.42\% & 0.22 & 22.73\% & 0.79 & 8.86\% & 0.67 & 4.48\% & 0.45 & 57.78\% \\
MOEA/D-GEN & 0.82 & 9.76\% & 0.22 & 22.73\% & 0.80 & 7.50\% & 0.68 & 2.94\% & 0.47 & 51.06\% \\ \hline
\end{tabular}%
}
\end{table*}

\end{center}

\begin{center}
\begin{table*}[ht]
\caption{Cost ($f_{cost}$) savings by OptLLM to match the baseline's performance}
\label{tb_comparison_saving}
\resizebox{2\columnwidth}{!}{%
\centering
\begin{tabular}{|c|ccc|ccc|ccc|ccc|ccc|}
\hline
Dataset & \multicolumn{3}{c|}{AGNEWS} & \multicolumn{3}{c|}{COQA} & \multicolumn{3}{c|}{HEADLINES} & \multicolumn{3}{c|}{SCIQ} & \multicolumn{3}{c|}{LogPai} \\ \hline
Alg & Baselines & OptLLM & Savings & Baselines & OptLLM & Savings & Baselines & OptLLM & Savings & Baselines & OptLLM & Savings & Baselines & OptLLM & Savings \\ \hline
NSGA-II & 30.84 & 2.35 & 92.38\% & 26.63 & 16.91 & 36.50\% & 4.78 & 3.40 & 28.87\% & 28.70 & 20.14 & 29.83\% & 11.93 & 1.83 & 84.66\% \\
MOPSO & 51.15 & 5.10 & 90.03\% & 36.17 & 32.99 & 8.79\% & 14.63 & 1.96 & 86.60\% & 80.30 & 41.11 & 48.80\% & 3.10 & 2.07 & 33.23\% \\
MOEA/D & 16.21 & 1.54 & 90.50\% & 25.74 & 17.40 & 32.40\% & 6.67 & 0.62 & 90.70\% & 44.78 & 10.42 & 76.73\% & 2.01 & 1.83 & 8.96\% \\
RNSGA-II & 26.44 & 3.07 & 88.39\% & 39.82 & 18.40 & 53.79\% & 16.15 & 0.71 & 95.60\% & 71.15 & 10.45 & 85.31\% & 11.17 & 1.83 & 83.62\% \\
SMS-EMOA & 28.62 & 2.83 & 90.11\% & 43.34 & 16.51 & 61.91\% & 16.71 & 0.69 & 95.87\% & 77.91 & 9.53 & 87.77\% & 9.77 & 1.83 & 81.27\% \\
MOEA/D-GEN & 35.44 & 4.81 & 86.43\% & 28.15 & 19.19 & 31.83\% & 14.81 & 2.31 & 84.40\% & 66.96 & 32.08 & 52.09\% & 3.79 & 1.83 & 51.72\% \\ \hline
\end{tabular}%
}
\end{table*}
\vspace{-16pt}
\end{center}

\begin{center}
\begin{table*}[ht]
\caption{Comparisons of Solution Sets from all algorithms in terms of IGD, $\Delta$, and Time}
\label{tb_comparison_metric}
\centering
\begin{tabular}{|c|ccc|ccc|ccc|ccc|ccc|}
\hline
Dataset & \multicolumn{3}{c|}{AGNEWS} & \multicolumn{3}{c|}{COQA} & \multicolumn{3}{c|}{HEADLINES} & \multicolumn{3}{c|}{SCIQ} & \multicolumn{3}{c|}{LogPai} \\ \hline
Metrics & IGD & $\Delta$ & Time & IGD & $\Delta$ & Time & IGD & $\Delta$ & Time & IGD & $\Delta$ & Time & IGD & $\Delta$ & Time \\ \hline
OptLLM & \textbf{0.13} & \textbf{0.55} & \textbf{6.15} & \textbf{0.27} & \textbf{0.79} & \textbf{7.64} & \textbf{0.08} & \textbf{0.53} & \textbf{7.71} & \textbf{0.14} & \textbf{0.58} & \textbf{10.01} & \textbf{0.19} & \textbf{0.95} & \textbf{18.14} \\
NSGA-II & 22.23 & 1.00 & 30.16 & 16.60 & 1.00 & 36.46 & 4.52 & 1.00 & 40.35 & 26.01 & 1.00 & 50.53 & 10.90 & 1.00 & 129.20 \\
MOPSO & 13.13 & 0.99 & 33.85 & 16.06 & 0.98 & 39.93 & 5.26 & 0.99 & 43.93 & 30.07 & 0.99 & 51.68 & 0.38 & 1.00 & 126.01 \\
MOEA/D & 13.97 & 0.99 & 33.91 & 23.26 & 1.00 & 39.65 & 4.44 & 0.99 & 43.79 & 44.32 & 1.00 & 50.97 & 1.05 & 0.98 & 127.24 \\
RNSGA-II & 23.85 & 0.99 & 30.43 & 34.87 & 0.99 & 35.10 & 15.02 & 0.99 & 42.78 & 63.54 & 0.99 & 50.76 & 10.17 & 1.00 & 128.11 \\
SMS-EMOA & 25.97 & 0.99 & 30.60 & 39.30 & 0.99 & 36.14 & 15.85 & 0.99 & 39.78 & 72.74 & 0.99 & 50.61 & 8.78 & 1.00 & 130.22 \\
MOEA/D-GEN & 9.12 & 1.09 & 35.06 & 23.91 & 1.16 & 39.54 & 4.69 & 1.07 & 44.80 & 42.52 & 1.13 & 45.47 & 0.75 & 1.07 & 124.00 \\ \hline
\end{tabular}
\end{table*}
\vspace{-28pt}
\end{center}

There are two possible reasons why OptLLM performs better. First, the solutions generated by the optimization component can be regarded as the ones with the best trade-off between cost and predicted accuracy (See Section~\ref{sec4OptLLM}). Second, multi-objective algorithms may be more prone to getting trapped in local optima, especially when the predicted accuracies for different queries are similar. In such cases, the algorithms may struggle to identify the globally optimal solutions without additional problem-specific information to guide the search process. OptLLM, on the other hand, incorporates heuristic rules through its prediction component, which helps to guide the optimization process and avoid local optima.

Overall, the experimental results demonstrate the superior performance of OptLLM in multiple aspects. 

To verify the comparison, we conduct a statistical test, and the results are provided on our project webpage due to the space limitation\textsuperscript{\ref{github}}.

\begin{mdframed}
\textit{\noindent In response to RQ1, our results show that OptLLM outperforms both single LLM allocation solutions and other multi-objective optimization strategies.}
\end{mdframed} 

\subsection{RQ2: Ablation Study}
\begin{figure}
    \centering
    \includegraphics[width=\linewidth]{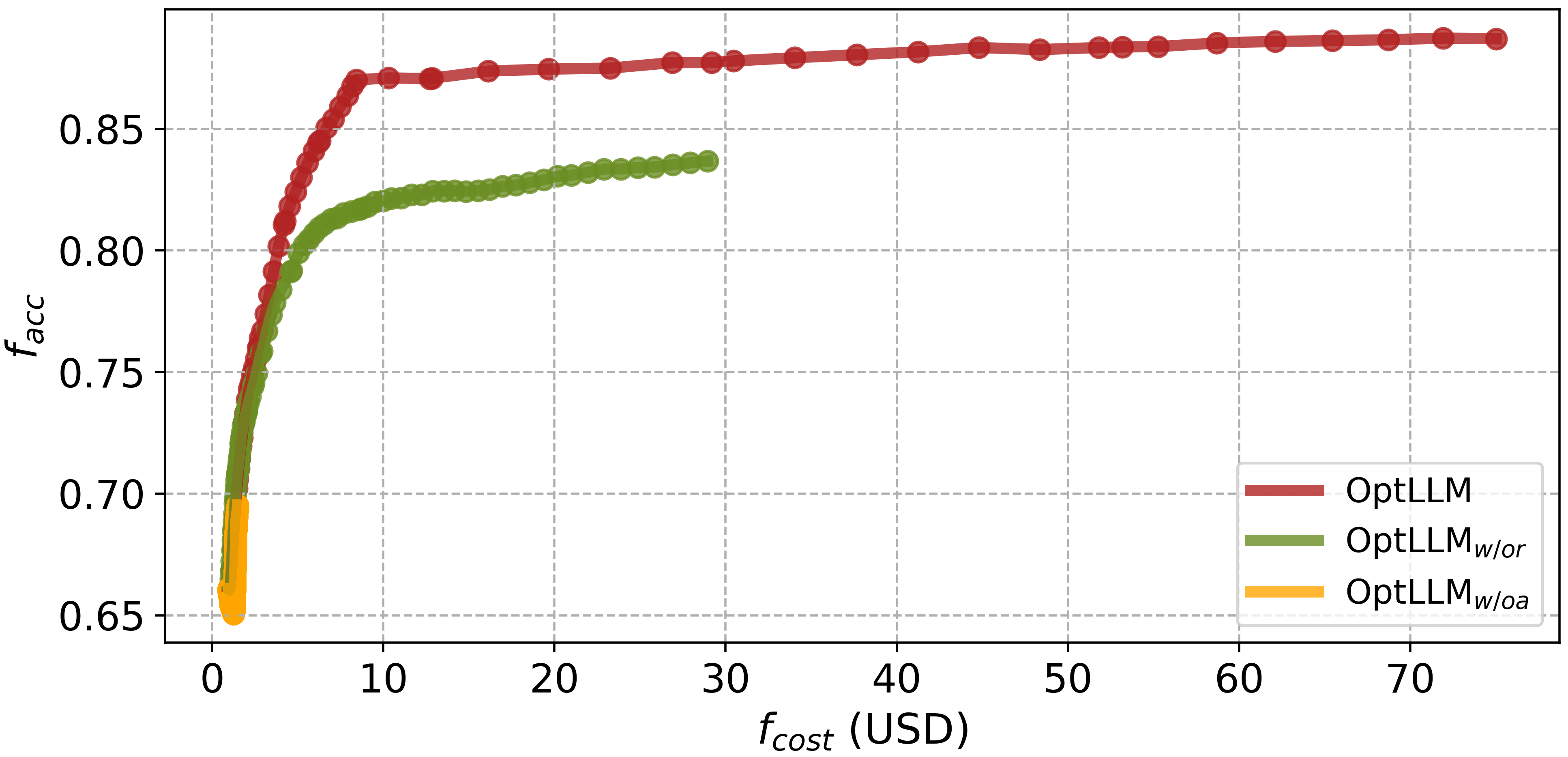}
    \caption{Ablation study on AGNEWS. The optimization component and robust-aware prediction help find the solution with higher accuracy. The solution with the highest accuracy by OptLLM$_{\text{w/o o}}$ is 69.54\%, OptLLM$_{\text{w/o r}}$ is 83.67\%, OptLLM is 88.74\%}
    \label{fig:ablation_agnews}
\end{figure}

To assess the performance improvement contributed by the optimization component of OptLLM, we conduct an ablation study. We design two variant algorithms: OptLLM$_{\text{w/o o}}$, which represents OptLLM without optimization, and OptLLM$_{\text{w/o r}}$, which represents OptLLM without robust-aware prediction. 
The results of OptLLM and its alternative versions are presented in Table~\ref{tb_ablation}.

In OptLLM$_{\text{w/o o}}$, we consider an LLM capable of processing a query correctly if the predicted probability of a correct response exceeds 0.5.
Solutions are then generated by assigning each job to the LLM predicted to process it correctly at the lowest cost. As illustrated in Figure~\ref{fig:ablation_agnews}, this approach successfully provides low-cost solutions but fails to identify solutions with high accuracy, indicating that the prediction component alone is insufficient for achieving optimal performance. The results presented in Table~\ref{tb_ablation} demonstrate the crucial role of the optimization module in OptLLM, as it significantly improves both the quality and diversity of the generated solutions. By facilitating the discovery of more high-accuracy solutions, the optimization process enhances the overall performance and versatility of the OptLLM framework. 

In OptLLM$_{\text{w/o r}}$, we use the predicted accuracy output by the base model to evaluate candidate solutions during the optimization phase. The difference between OptLLM$_{\text{w/o r}}$ and OptLLM is minimal. Further analysis reveals that when the prediction model has high accuracy, the performance of OptLLM$_{\text{w/o r}}$ is similar to that of OptLLM. This similarity is due to the smaller difference between the predicted accuracy and the robust-aware predicted accuracy when the prediction model performs well. In other words, the robust-aware function is useful when the predicted accuracy is suboptimal.
\begin{center}
\begin{table*}[ht]
\caption{Ablation study of OptLLM}
\label{tb_ablation}
\centering
\begin{tabular}{|c|ccc|ccc|ccc|ccc|ccc|}
\hline
 & \multicolumn{3}{c|}{AGNEWS} & \multicolumn{3}{c|}{COQA} & \multicolumn{3}{c|}{HEADLINES} & \multicolumn{3}{c|}{SCIQ} & \multicolumn{3}{c|}{LogPai} \\ \cline{2-16} 
Metrics & IGD & $\Delta$ & Time & IGD & $\Delta$ & Time & IGD & $\Delta$ & Time & IGD & $\Delta$ & Time & IGD & $\Delta$ & Time \\ \hline
OptLLM & \textbf{0.13} & \textbf{0.55} & \textbf{6.15} & 0.27 & \textbf{0.79} & \textbf{7.64} & \textbf{0.08} & \textbf{0.53} & \textbf{7.71} & \textbf{0.14} & \textbf{0.58} & \textbf{10.01} & \textbf{0.19} & \textbf{0.95} & \textbf{18.14} \\
$\text{OptLLM}_{\text{w/o r}}$ & 0.13 & 0.78 & 7.23 & 0.24 & 0.80 & 10.02 & 0.08 & 0.79 & 11.23 & 0.28 & 0.77 & 10.33 & 0.20 & 0.97 & 19.49 \\
$\text{OptLLM}_{\text{w/o a}}$& 0.23 & 0.86 & 14.38 & \textbf{0.12} & 1.1 & 16.74 & 0.1 & 0.87 &18.41  &0.40&1.08 &27.60 &0.19  & 1.21 & 500.32\\
\hline
\end{tabular}
\end{table*}
\vspace{-18pt}
\end{center}

\begin{center}
\begin{table*}[ht]
\caption{Comparison of OptLLM with different settings of GN (N: number of generated solutions)}
\label{tb_gn}
\centering
\resizebox{2\columnwidth}{!}{%
\begin{tabular}{|c|cccc|cccc|cccc|cccc|cccc|}
\hline
\multirow{2}{*}{Dataset} & \multicolumn{4}{c|}{AGNEWS} & \multicolumn{4}{c|}{COQA} & \multicolumn{4}{c|}{HEADLINES} & \multicolumn{4}{c|}{SCIQ} & \multicolumn{4}{c|}{LogPai} \\ \cline{2-21} 
 & IGD & $\Delta$ & Time & N & IGD & $\Delta$ & Time & N & IGD & $\Delta$ & Time & N & IGD & $\Delta$ & Time & N & IGD & $\Delta$ & Time & N \\ \hline
GN=10 & 0.17 & 0.82 & \textbf{2.29} & 14 & \textbf{0.27} & \textbf{0.80} & \textbf{2.13} & 20 & 0.10 & 0.85 & \textbf{3.49} & 16 & 0.18 & 0.80 & \textbf{4.81} & 18 & \textbf{0.14} & \textbf{0.83} & \textbf{7.78} & 20 \\
GN=50 & 0.13 & 0.80 & 6.35 & 71 & 0.27 & 0.80 & 7.64 & 99 & 0.08 & 0.79 & 8.85 & 81 & 0.14 & 0.77 & 10.01 & 92 & 0.19 & 0.95 & 18.14 & 100 \\
GN=100 & 0.13 & 0.80 & 11.82 & 144 & 0.30 & 0.81 & 13.52 & 199 & 0.08 & 0.78 & 15.65 & 162 & 0.14 & 0.77 & 20.45 & 189 & 0.20 & 0.96 & 37.87 & 200 \\
GN=200 & \textbf{0.13} & \textbf{0.79} & 18.25 & \textbf{291} & 0.33 & 0.82 & 23.42 & \textbf{398} & \textbf{0.08} & \textbf{0.77} & 25.35 & \textbf{327} & \textbf{0.13} & \textbf{0.77} & 35.72 & \textbf{387} & 0.18 & 0.95 & 63.60 & \textbf{400} \\ \hline
\end{tabular}%
}
\end{table*}
\vspace{-18pt}
\end{center}

\begin{center}
\begin{table}[ht]
\label{tbl:train_size}
\caption{Results under different training data sizes}
\resizebox{\columnwidth}{!}{%
\begin{tabular}{|c|cc|cc|cc|cc|cc|}
\hline
Dataset & \multicolumn{2}{c|}{AGNEWS} & \multicolumn{2}{c|}{COQA} & \multicolumn{2}{c|}{HEADLINES} & \multicolumn{2}{c|}{SCIQ} & \multicolumn{2}{c|}{LogPai} \\ \hline
size & IGD & $\Delta$ & IGD & $\Delta$ & IGD & $\Delta$ & IGD & $\Delta$ & IGD & $\Delta$ \\ \hline
1\% & 0.13 & \textbf{0.65} & 0.23 & \textbf{0.66} & 0.08 & \textbf{0.61} & \textbf{0.13} & \textbf{0.61} & 0.19 & \textbf{0.87} \\
5\% & 0.11 & 0.71 & 0.19 & 0.67 & 0.07 & 0.71 & 0.14 & 0.63 & 0.18 & 1.01 \\
10\% & 0.10 & 0.73 & 0.17 & 0.69 & 0.06 & 0.74 & 0.14 & 0.63 & 0.18 & 1.09 \\
20\% & \textbf{0.09} & 0.74 & \textbf{0.15} & 0.68 & \textbf{0.06} & 0.73 & 0.14 & 0.63 & \textbf{0.16} & 1.10 \\ \hline
\end{tabular}%
}
\end{table}
\vspace{-18pt}
\end{center}

\begin{center}
\begin{table}[ht]
\caption{Accuracy of prediction model with different $\alpha$}
\label{tbl_alpha}
\resizebox{\columnwidth}{!}{%
\begin{tabular}{|l|c|c|c|c|c|}
\hline
\multicolumn{1}{|c|}{Dataset} & AGNEWS & COQA & HEADLINES & SCIQ & LogPai \\ \hline
base model & 0.21 & 0.45 & 0.44 & 0.25 & \textbf{0.46} \\
$\alpha=-1$ & 0.14 & \textbf{0.50} & 0.41 & 0.17 & 0.39 \\
$\alpha=-0.5$ & 0.19 & 0.48 & 0.45 & 0.24 & 0.43 \\
$\alpha=0$ & 0.24 & 0.47 & 0.46 & 0.31 & 0.45 \\
$\alpha=0.5$ & 0.26 & 0.43 & 0.47 & 0.35 & 0.43 \\
$\alpha=1$ & \textbf{0.27} & 0.40 & \textbf{0.48} & \textbf{0.37} & 0.38 \\ \hline
\end{tabular}%
}
\end{table}
\vspace{-18pt}
\end{center}
Figure~\ref{fig:ablation_agnews} provides an example using the AGNEWS dataset. It clearly shows that OptLLM$_{\text{w/o o}}$ generates substantially fewer solutions and fails to find solutions with higher accuracy. Although OptLLM$_{\text{w/o r}}$ performs better than OptLLM$_{\text{w/o o}}$, OptLLM, which incorporates a robust-aware function, excels in searching for solutions with higher accuracy.

\begin{mdframed}
\textit{\noindent In response to RQ2, both robust-aware prediction function and heuristic optimization significantly enhance the overall capability 
of the OptLLM.
}
\end{mdframed}

\subsection{RQ3: Effect of Hyper-Parameter Settings}

\subsubsection{The grid parameter GN}
In Table~\ref{tb_gn}, we show IGD, $\Delta$, average runtime, and average size of the solution set produced by OptLLM under various settings of the GN parameter, which controls objective distance during the optimization component's destruction phase. Analysis of the table shows that increasing the GN value leads to a larger number of solutions generated by OptLLM but at the cost of increased computation time. While the solution quality, as indicated by IGD and $\Delta$, remains similar across different GN settings.

\subsubsection{The training data size of prediction component}
Table~\ref{tbl:train_size} presents the results of OptLLM under different training data sizes for the prediction component. As the size of the training data increases, the $\Delta$ metric consistently increases, while the IGD shows a slight decrease overall. 
Consequently, we choose to use 1\% of the dataset for training, striking a balance between performance and the cost of acquiring labelled data.

\subsubsection{The robustness parameter $\alpha$}
Table \ref{tbl_alpha} presents the accuracy of the predictions on different datasets when using different values of the robustness parameter $\alpha$. The base model represents the accuracy of the classification model trained directly on the training data without using bootstrap sampling. The impact of the robustness parameter $\alpha$ on prediction accuracy varies across datasets. While some datasets (AGNEWS, HEADLINES, SCIQ) benefit from higher values of $\alpha$, others (COQA, LogPai) achieve better performance with lower or zero values of $\alpha$.
\begin{mdframed}
\textit{\noindent In response to RQ3, OptLLM allows for the adjustment of GN to a larger value to generate more alternative solutions, while it prefers a relatively small training size for decreased $\Delta$ and lower costs in labelled data acquisition. Moreover, the effect of the robustness parameter $\alpha$ on prediction accuracy is found to differ among datasets.
}
\end{mdframed}

\section{Related work}
LLMs have been adopted to address a variety of tasks and have shown numerous potential breakthroughs~\cite{le2023log, jiang2023llmparser, hou2023large, sobania2023analysis}. This has driven the demand for optimizing the performance and cost of LLM-based tasks. Zong et al.\cite{zong2023model} conducted an empirical study assessing the annotation, training, and inference costs of various LLMs in text classification, offering guidance for selecting the most suitable model in real-world scenarios. Chen et al.\cite{chen2023frugalgpt} introduced FrugalGPT, an algorithmic framework that adaptively selects suitable LLMs for different queries to reduce cost and improve accuracy. Specifically, it sends queries sequentially to available LLMs until reaching a predefined performance threshold defined by a scoring function. Similarly, Sakota et al.~\cite{vsakota2023fly} developed FORC, a framework that allocates queries to suitable LLMs by predicting cost and performance of each candidate LLM on every query. Notably, these prediction models often require substantial training data (e.g., 50\%), incurring high label collection cost. Unlike the above work, we formulate the problem as a multi-objective optimization and use less data (i.e., 1\%) to construct the prediction model. We focus on the one-time allocation of each query to a suitable LLM, aiming to achieve a trade-off between cost and accuracy.

\section{Conclusion}
\label{sec8:Conclusion}
The substantial cost associated with leveraging LLMs in real-world scenarios poses a significant obstacle to their widespread adoption. In this paper, we propose OptLLM, an effective and efficient framework that automatically assigns queries to suitable LLMs. OptLLM offers a diverse set of non-dominated solutions, from which users can select based on their budget and performance requirements. These solutions span the range from the highest expected accuracy to the lowest cost alternatives. Our experimental results demonstrate that OptLLM outperforms other baseline methods in terms of the effectiveness and efficiency of query assignments. 

The source code of OptLLM and all experimental results are available at 
\url{https://github.com/superyue72/OptLLM}.

\balance
\bibliographystyle{IEEEtran}
\bibliography{reference}

\balance

\end{document}